\newtheorem{theo}{Theorem}[section]
{\theorembodyfont{\rm} }
{\theorembodyfont{\rm} \newtheorem{exa}[theo]{Example}}
\newtheorem{rem}[theo]{Remark}
\newtheorem{alg}[theo]{Algorithm}}
\newenvironment{proof}{{\sc Proof:}}{\mbox{}\hfill$\Box$\par}
\newcommand{\eqnref}[1]{~\mbox{$\text{(\ref{#1})}$}}
\newcommand{\junk}[1]{}
\newcommand{\TS}{\textstyle}
\newcommand{\N}{{\mathbb N}}
\newcommand{\F}{{\mathbb F}}
\newcommand{\cC}{{\mathcal C}}
\newcommand{\cB}{{\mathcal B}}
\newcommand{\cO}{{\mathcal O}}
\newcommand{\fv}{\mathfrak{v}}
\newcommand{\rank}{\text{rk}\,}
\newcommand{\im}{\mbox{\rm im}\,}
\newcommand{\dist}{\mbox{\rm dist}}
\newcommand{\dfree}{\mbox{${\rm d}_{\rm free}$}}
\newcommand{\wt}{\mbox{${\rm wt}$}}
\newcommand{\dd}{\mbox{${\rm d}$}}
\newcounter{alp}
\newcounter{ara}
\newcounter{rom}
\newenvironment{alphalist}{\begin{list}{(\alph{alp})\hfill}{\usecounter{alp}
     \topsep0ex \labelwidth.7cm \leftmargin.7cm \labelsep0cm
     \rightmargin0cm \parsep0ex \itemsep.6ex
     \partopsep1.6ex}}{\end{list}}
\newenvironment{arabiclist}{\begin{list}{(\arabic{ara})\hfill}{\usecounter{ara}
     \topsep0ex \labelwidth.7cm \leftmargin.7cm \labelsep0cm
     \rightmargin0cm \parsep0ex \itemsep.6ex
     \partopsep1.6ex}}{\end{list}}
\newenvironment{Algolist}{\begin{list}{$\bullet$\hfill}{\usecounter{ara}
     \topsep0ex \labelwidth.3cm \leftmargin.3cm \labelsep0cm
     \rightmargin0cm \parsep0ex \itemsep.6ex
     \partopsep1.6ex}}{\end{list}}
\newenvironment{algolist}{\begin{list}{$\diamond$\hfill}{\usecounter{ara}
     \topsep0ex \labelwidth.3cm \leftmargin.3cm \labelsep0cm
     \rightmargin0cm \parsep0ex \itemsep.6ex
     \partopsep1.6ex}}{\end{list}}
\title{Algebraic Decoding for Doubly Cyclic Convolutional Codes}
\date{August 4, 2009}
\author{Heide Gluesing-Luerssen\footnote{University of Kentucky, Department of Mathematics,
       715 Patterson Office Tower, Lexington, KY 40506-0027, USA; heidegl@ms.uky.edu} ,
       Uwe Helmke\thanks{Institut f\"ur Mathematik, Lehrstuhl f\"ur Mathematik~II, Universit\"at W\"urzburg,
       Am Hubland, 97074~W\"urzburg, Germany; helmke@mathematik.uni-wuerzburg.de} ,
       Jos{\'e} Ignacio Iglesias Curto\thanks{Universidad de Salamanca, Departamento de Matem{\'a}ticas,
       Plaza de la Merced 1--4, 37008 Salamanca, Spain; joseig@usal.es}
       }
\begin{document}
\maketitle

\renewcommand{\thefootnote}{}
\footnotetext{Research by U.H. has been partially
supported by grant HE 1858/12-1 within the DFG SPP 1305.
Research by J.I.I.C. has been partially supported by Junta de Castilla y Le{\'o}n through research project SA029A08}

{\bf Abstract:}
An iterative decoding algorithm for convolutional codes is presented.
It successively processes~$N$ consecutive blocks of the received word in order to decode the first block.
A bound is presented showing which error configurations can be corrected.
The algorithm can be efficiently used on a particular class of convolutional codes, known as doubly cyclic convolutional codes.
Due to their highly algebraic structure those codes are well suited for the algorithm and the main step of the procedure
can be carried out using Reed-Solomon decoding.
Examples illustrate the decoding and a comparison with existing algorithms is being made.

{\bf Keywords:} Convolutional codes, algebraic decoding, cyclic convolutional codes, Reed-Solo\-mon decoding, list decoding

{\bf MSC (2000):} 94B10, 94B35, 93B15, 93B20

\section{Introduction}
\setcounter{equation}{0}

The main task of coding theory can be described as designing codes with good error-correcting performance along with an
efficient decoding algorithm.
For block codes two types of answers are known to this quest.
On the one hand, there are algebraic decoding algorithms for the special class of BCH codes, see, e.g.,~\cite[Sec.~5.4]{HP03},
including the more recent list decoding procedures as developed in~\cite{Su97,GuSu99}, see also~\cite{RoRu00}.
On the other hand, there are graph-based decoding algorithms as introduced by~\cite{Wo78}.
These are iterative methods and work particularly well for LDPC codes; for an overview see for instance the thesis~\cite{Wi96}.

For convolutional codes the most prominent decoding algorithms are the Viterbi algorithm~\cite{Vi67} and variants thereof.
They are all trellis-based algorithms and mainly applicable to codes over small alphabets and not too large degree in
order to keep the underlying graph at a manageable size; for an overview see the monographs~\cite{JoZi99,LC83}.
In the 1970's the first attempts were made in order to construct convolutional codes with some additional underlying algebraic
structure in the hope of decoding them algebraically~\cite{Ju75,Pi76,Ro79}.
Later constructions included BCH convolutional codes~\cite{RoYo99} as well as specific constructions of cyclic convolutional
codes~\cite{GS04,GS06p} and Goppa convolutional codes~\cite{MPCS06}.
In the paper~\cite{Ro99}, a first step toward an algebraic decoding algorithm for convolutional codes has been made.
It is based on an input/state/output description of the code and relies on the controllability matrix being the parity check
matrix of an algebraically decodable block code.
This makes the algorithm particularly suitable for the BCH codes developed in~\cite{RoYo99}.
In the thesis~\cite[Sec.~4.2]{Ts08} it is shown that the algorithm also applies to a certain class of $1$-dimensional
cyclic convolutional codes appearing as a special case in~\cite{GS06p}.
Finally, in the thesis~\cite[Sec.~4.3]{Wei98} a decoding algorithm for unit memory convolutional codes is developed which may be turned into
an algebraic algorithm if the underlying block codes can be decoded algebraically.
We will return to these algorithms at the end of Section~\ref{S-ExaComp} when comparing the performance of our
algorithm with theirs.

In this paper we will present an algebraic decoding algorithm for a particular class of convolutional codes.
It will depend on a chosen parameter~$N$ and a certain weight bound~$d$ and can correct up to $\lfloor d/2\rfloor$
errors appearing on any time window of length~$N$.
The algorithm is a special version of a decoding algorithm appearing first in~\cite[Sec.~4.4]{IC08} and~\cite{HI09}.
As opposed to our presentation, the algorithm is cast completely in the setting of input/state/output descriptions
in~\cite{IC08,HI09}.
The procedure works sequentially in the sense that~$N$ consecutive blocks of the received word are processed in order to
decode the first of those blocks.
This decoding step is based on the partial decoding of a certain block code.
Thereafter the algorithm moves one block further.
The details, in a slightly more general version, will be presented in the next section.
In Section~\ref{S-BlockAlg} we will show that the class of doubly cyclic convolutional codes, introduced in~\cite{GS06p}, is
particularly well suited for this algorithm.
Indeed, first of all the error parameter~$d$ can be made quite large (compared to the length, dimension, and degree of the code), and
secondly the partial decoding of the underlying block code can be achieved by the well-known and efficient Reed-Solomon decoding.
It should be mentioned that due to the large field size and degree of doubly cyclic codes, Viterbi decoding is not feasible
for this particular class of convolutional codes.
In the final section we will run some detailed examples and will compare our algorithm to the decoding algorithms
mentioned above with respect to error-correcting performance and time complexity.

Let us close the introduction with recalling the basic notions of convolutional coding theory as needed throughout the paper.
Let~$\F$ be a finite field and let $\F[z]$ and $\F[\![z]\!]$ denote the
rings of polynomials and formal power series in~$z$, respectively.
Throughout this paper we regard vectors as row vectors, so that in every vector-matrix multiplication the
matrix appears on the right.
A {\sl convolutional code of length\/} $n$ is an $\F[\![z]\!]$-submodule~$\cC$ of $\F[\![z]\!]^n$ of the form
\[
    \cC=\im G:=\{uG\,\big|\, u\in\F[\![z]\!]^k\}
\]
where~$G$ is a {\sl basic\/} matrix in $\F[z]^{k\times n}$, i.~e.
$\rank G(\lambda)=k \text{ for all }\lambda\in\overline{\F}$,
with $\overline{\F}$ being an algebraic closure of~$\F$.
We call such a matrix $G$ an {\sl encoder}, and the number
$\deg(\cC):=\deg(G):=\max\{\deg(M)\mid M\text{ is a $k$-minor of }G\}$
is said to be the {\sl degree\/} of the encoder~$G$ or of the code~$\cC$.
For each basic matrix the sum of its row degrees is at least $\deg(G)$, where the degree of a
polynomial row vector is defined as the maximal degree of its entries.
A matrix $G\in\F[z]^{k\times n}$ is said to be {\sl reduced\/} if the sum of its row degrees equals
$\deg(G)$; for the many characterizations of reducedness see, e.~g., \cite[Main~Thm.]{Fo75} or \cite[Thm.~A.2]{McE98}.
It is well known~\cite[p.~495]{Fo75} that each convolutional code admits a reduced encoder.
The row degrees of a reduced encoder are, up to ordering, uniquely determined by the code and are called the
{\sl Forney indices\/} of the code or of the encoder, and the maximal Forney index is called the {\sl memory\/} of the code.
The main example class of convolutional codes in this paper, so called doubly-cyclic convolutional codes, will be introduced in
Theorem~\ref{P-DCCC}.

Besides these algebraic notions the main concept in error-control coding is the weight.
The well-known {\sl Hamming weight\/} of a vector $v=(v_1,\ldots,v_n)\in~\F^n$ is given as
$\wt(v)=\#\{i\mid v_i\not=0\}$ and $\dd(v,\,w):=\wt(v-w)$ denotes the associated Hamming distance.
For a polynomial vector $v=\sum_{t=0}^N v_t z^t\in\F[z]^n$, $v_t\in\F^n$, we define
its {\sl overall Hamming weight\/} as $\wt(v)=\sum_{t=0}^N \wt(v_t)$.
The {\sl distance\/} $\min\{\wt(v)\,|\, v\in\cC,\,v\not=0\}$ of a block code~$\cC$ is
denoted by $\dist(\cC)$, while for a convolutional code it is written as $\dfree(\cC)$.

\section{A General Decoding Algorithm}\label{S-GenAlg}
\setcounter{equation}{0}
We consider a convolutional code $\cC=\im G:=\{uG\mid u\in\F[\![z]\!]^k\}\subseteq\F[\![z]\!]^n$ having a
basic generator matrix
\begin{equation}\label{e-G}
  G=\sum_{j=0}^m G_{j}z^{j}\in\F[z]^{k\times n},\ G_{j}\in\F^{k\times n}.
\end{equation}
For the decoding algorithm we need to fix a processing depth $N\in\N$ and define the block code
\begin{equation}\label{e-Ghat}
  \cB:=\im \hat{G}\subseteq\F^{Nn},\text{ where }
  \hat{G}:=\begin{pmatrix}G_0&G_1&\ldots&G_{N-1}\\ &G_0&\ldots&G_{N-2}\\
     & &\ddots&\vdots\\ & & &G_0\end{pmatrix}\in\F^{Nk\times Nn},
\end{equation}
where, as usual, $G_{j}=0$ for $j>m$ and the empty triangular part is filled with zero entries.
Notice that due to the basicness of the encoder~$G$, the matrix~$G_0$, and hence~$\hat{G}$, has full row rank.

Besides the processing depth~$N$, the decoding algorithm will also depend on the choice of a step size
parameter~$L\in\{1,\ldots,N\}$ and a weight parameter $d:=d(L)\geq\dist(\cB)-1$ satisfying
\begin{equation}\label{e-errorass}
  v:=(v_0,\ldots,v_{N-1})\in\cB,\
  \wt(v)\leq d\Longrightarrow (v_0,\ldots,v_{L-1})=0.
\end{equation}
It is clear that, $d=\dist(\cB)-1=\dist(\im G_0)-1$ satisfies\eqnref{e-errorass}, regardless of the value of~$L$.
Later on we will see that the error-correcting bound of our decoding algorithm will be given by
$\lfloor d/2\rfloor$ and therefore we will be interested in choosing~$d$ as large as possible.
However, Algorithm~\ref{A-Dec} below will not depend on choosing~$d$ optimal.
In the next section, we will present a class of codes along with a specific large weight parameters~$d$
satisfying\eqnref{e-errorass}, and we will show how to carry out the main step of the algorithm for those
codes efficiently.

\begin{exa}\label{e_dchoice}
Let $G=(1,\,1,\,z,\,\ldots,\,z)\in\F_2[z]^{1\times n}$. Choose $N=2$ and $L=1$. Then
\[
  \cB=\im\left(\!\!\begin{array}{ccccc|ccccc}1&1&0&\ldots&0&0&0&1&\ldots&1\\ \hline 0&0&0&\ldots&0&1&1&0&\ldots&0\end{array}\!\!\right).
\]
Notice that~$\dist(\cB)=\dist(\im G_0)=2$.
By inspecting all~$4$ codewords in~$\cB$ we see that $d=n-1$ is the largest value for which\eqnref{e-errorass}
is true
(actually, $\wt(v)\leq n-1$ implies $\wt(v)\leq2$).
One should also observe that in this case $d=\dfree(\cC)-1$, which is the maximum possible value for~$d$, see Remark~\ref{R-vweight}(b).
\end{exa}

We will also need the matrix $\tilde{G}\in\F^{mk\times Nn}$ defined as
\[
  \tilde{G}:=\!\!\begin{pmatrix}G_m& & &\\G_{m-1}&G_m& & \\ \vdots&\vdots&\ddots& \\
     G_{m-N+1}&\!\!G_{m-N+2}\!\!&\ldots& G_m \\
     \vdots&\vdots& & \vdots\\
    G_1&G_2&\ldots&G_N\end{pmatrix}\!\!\text{ if $N\leq m$ and }
  \tilde{G}:=\!\!\begin{pmatrix}G_m& & & &0\\G_{m-1}\!\!&G_m& & &0 \\ \vdots&\vdots&\ddots& &\vdots \\
    G_1&G_2&\ldots&G_m&0\end{pmatrix}\!\!\text{ if $N>m$,}
\]
where in the second case the zero matrices on the very right consist of $(N-m)n$ columns.
Notice that the matrix $\Big(\begin{smallmatrix}\tilde{G}\\ \hat{G}\end{smallmatrix}\Big)\in\F^{(N+m)k\times Nn}$
is a typical block in the sliding generator matrix of the code~$\cC$.
In particular, if $v=\sum_{t\geq0}v_tz^t=(\sum_{t\geq0}u_tz^t)G$ is a codeword in~$\cC$, then
\begin{equation}\label{e-vu}
\left.
\begin{array}{rcl}
  (v_j,v_{j+1},\ldots,v_{j+N-1})&\!\!\!\!=\!\!\!\!&(u_j,u_{j+1},\ldots,u_{j+N-1})\hat{G}+(u_{j-m},u_{j-m+1},\ldots,u_{j-1})\tilde{G}\\[1ex]
  &\!\!\!\!=\!\!\!\!&(u_{j-m},u_{j-m+1},\ldots,u_{j+N-1})\begin{pmatrix}\tilde{G}\\\hat{G}\end{pmatrix}
\end{array}
\hspace*{2em}\right\}
\end{equation}
for all $j\geq0$.
For a power series $v=\sum_{t\geq0}v_tz^t$ and $M\in\N$
we denote by $v_{[0,M]}$ the truncation $\sum_{t=0}^M v_tz^t$ of~$v$ at time~$M$.

\begin{rem}\label{R-vweight}
Let~$L$ and~$d$ satisfy\eqnref{e-errorass}.
\begin{alphalist}
\item From the previous discussion it follows that if $v\in\cC$ is such that $\wt(v_{[jL,jL+N-1]})\leq d$ for all $j\geq0$,
      then $v=0$.
      Indeed, suppose $v=\sum_{t\geq0}v_tz^t=(\sum_{t\geq0}u_tz^t)G$. Then
      $(v_0,\ldots,v_{N-1})=(u_0,\ldots,u_{N-1})\hat{G}\in\cB$ and thus\eqnref{e-errorass} implies $v_j=0$ for $j=0,\ldots,L-1$.
      Hence $u_j=0$ for $j=0,\ldots,L-1$ due to the full row rank of~$G_0$ (delay-freeness of~$G$) and Equation\eqnref{e-vu} shows that
      $(v_L,\ldots,v_{L+N-1})=(u_L,u_{L+1},\ldots,u_{L+N-1})\hat{G}\in\cB$ and\eqnref{e-errorass} implies $v_j=0$ for $j=L,\ldots,2L-1$.
      Proceeding this way leads to $v=0$.
      This also shows that if~$d$ is the largest value satisfying\eqnref{e-errorass} for $L=1$, then
      $d+1$ is the $(N-1)$-th column distance of the convolutional code~$\cC$ in the sense of
      \cite[Sec.~3.1]{JoZi99}.
\item The free distance of the code is at least~$d+1$. Indeed, suppose $v\in\cC\backslash\{0\}$ and $j\in\N_0$ is minimal such
      that $v_j\not=0$. Then $(v_j,\ldots,v_{j+N-1})\in\cB$ and $\wt(v)\geq\wt(v_j,\ldots,v_{j+N-1})\geq d+1$ by\eqnref{e-errorass}.
\end{alphalist}
\end{rem}

Now we are ready to formulate the general steps of the decoding algorithm.
The algorithm has been presented first in~\cite[Sec.~4.4]{IC08} and~\cite{HI09}, where it is given in a more
general form and within the context of the tracking problem of control theory.
In those papers it is given in terms of an input/state/output representation of the convolutional code.

Let us fix $N\in\N$ and $L,\,d$ satisfying\eqnref{e-errorass}.
The following algorithm will, in each cycle, process strings of~$N$ consecutive received blocks in order to decode the
first~$L$ of those blocks with respect to the convolutional code~$\cC=\im G$.
In the next cycle the algorithm will move~$L$ steps further down the time axis.

\begin{alg}\label{A-Dec}
Let $\sum_{t\geq0}\tilde{v}_tz^t\in\F[\![z]\!]^n$ be a received word.
\\
Suppose that for some $j\geq0$ we have computed $\hat{u}_t\in\F^k,\,\hat{v}_t\in\F^n,\,t=0,\ldots,jL-1$.
We assume that~$\hat{v}$ is the decoding of $\tilde{v}$ on the time interval
$[0,\,jL-1]$ and that $\hat{u}$ is the associated message string.
In the initial step where $j=0$ this condition is empty and in Step~1 the vector~$S$ is set to zero.
\\[.5ex]
\underline{Step 1:}
Put $\tilde{V}:=(\tilde{v}_{jL},\ldots,\tilde{v}_{jL+N-1})$ and
$\hat{S}:=(\hat{u}_{jL-m},\hat{u}_{jL-m+1},\ldots,\hat{u}_{jL-1})\tilde{G}$ (where $\hat{u}_i=0$ for $i<0$).
Decode the word $\tilde{w}:=\tilde{V}-\hat{S}$ with respect to the code~$\cB$ in such a way that
if $\dd(\tilde{w},\cB)\leq\lfloor d/2\rfloor$ then the decoded word~$\hat{w}\in\cB$ satisfies
$\dd(\tilde{w},\hat{w})\leq\lfloor d/2\rfloor$
(if $\dd(\tilde{w},\cB)>\lfloor d/2\rfloor$, no specification is made for the decoded word $\hat{w}\in\cB$).
Let $\hat{u}:=(\hat{u}_{jL},\ldots,\hat{u}_{jL+N-1})\in\F^{Nk}$ be the message associated to~$\hat{w}$, that is,
$\hat{w}=\hat{u}\hat{G}$.
Put
\begin{equation}\label{e-uyhat}
  \hat{w}+\hat{S}=:(\hat{v}_{jL},\hat{v}_{jL+1},\ldots,\hat{v}_{jL+N-1})\in\F^{Nn}
\end{equation}
Return the data $\hat{u}_t,\,\hat{v}_t,\,t=jL,\ldots,(j+1)L-1$ as the decoding of $\tilde{v}_t$
on the time interval $[jL,(j+1)L-1]$ and discard the remaining entries of~$\hat{w}+\hat{S}$ and~$\hat{u}$.
\\[.5ex]
\underline{Step 2:}
Replace~$j$ by $j+1$ and return to Step~1.
\end{alg}

\begin{theo}\label{T-DecAlg}
Suppose the codeword $v=\sum_{t\geq0}v_tz^t=(\sum_{t\geq0}u_tz^t)G\in\cC$ has been sent and the word
$\sum_{t\geq0}\tilde{v}_tz^t\in\F[\![z]\!]^n$ has been received.
\begin{arabiclist}
\item The data returned by Algorithm~\ref{A-Dec} satisfy $\hat{v}:=\sum_{t\geq0}\hat{v}_tz^t=(\sum_{t\geq0}\hat{u}_tz^t)G$,
      thus $\hat{v}$ is a codeword in~$\cC$ with associated message~$\sum_{t\geq0}\hat{u}_tz^t$.
\item Let~$L$ and~$d$ satisfy\eqnref{e-errorass}. If the transmission errors satisfy
      \begin{equation}\label{e-error}
         \dd\big((v_{jL},v_{jL+1},\ldots,v_{jL+N-1}),(\tilde{v}_{jL},\tilde{v}_{jL+1},\ldots,\tilde{v}_{jL+N-1})\big)
         \leq \Big\lfloor\frac{d}{2}\Big\rfloor\text{ for all }j\geq 0,
      \end{equation}
      then $\hat{v}=v$, that is, Algorithm~\ref{A-Dec} will return the sent codeword~$v$.
      In particular, $\dd(\tilde{w},\cB)\leq\lfloor\frac{d}{2}\rfloor$ in each cycle of Step~1).
\end{arabiclist}
\end{theo}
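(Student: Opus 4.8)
The plan is to prove the two parts by induction on the cycle index $j$, establishing a single invariant that simultaneously gives the codeword property of part~(1) and the exact-recovery property of part~(2). Concretely, I would prove the following invariant: for every $j\geq0$, after the $j$-th cycle has returned its data, we have $\hat{u}_t=u_t$ and $\hat{v}_t=v_t$ for all $t=0,\ldots,(j+1)L-1$. Once this invariant holds for all $j$, part~(2) follows immediately under the error hypothesis\eqnref{e-error}, and part~(1) follows more generally because the recursion producing $\hat{v}$ from $\hat{u}$ is exactly the convolution with~$G$ encoded in the block matrices $\hat{G}$ and~$\tilde{G}$.

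First I would treat part~(1) on its own, since it requires no error bound. The claim is purely that $\hat{v}=\hat{u}\,G$ where $\hat{u}=\sum_t\hat{u}_tz^t$. The key observation is that Step~1 sets $\hat{v}_{jL+i}$ via\eqnref{e-uyhat} as the relevant block of $\hat{w}+\hat{S}=\hat{u}'\hat{G}+\hat{S}'\tilde{G}$, where $\hat{u}'=(\hat{u}_{jL},\ldots,\hat{u}_{jL+N-1})$ and $\hat{S}'=(\hat{u}_{jL-m},\ldots,\hat{u}_{jL-1})$. Comparing this with the sliding-window identity\eqnref{e-vu}, which says precisely that $(\hat{v}_{jL},\ldots,\hat{v}_{jL+N-1})=(\hat{u}_{jL-m},\ldots,\hat{u}_{jL+N-1})\Smalltwomat{\tilde{G}}{\hat{G}}$ is the block of the convolution $\hat{u}\,G$, shows that the algorithm's update reproduces exactly the encoding relation. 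I would note that the first~$L$ entries returned in each cycle are consistent across cycles because the discarded tail of one cycle is recomputed (and agrees) at the start of the next; delay-freeness of~$G$ (full row rank of~$G_0$) guarantees that the returned $\hat{u}_t$ determine the returned $\hat{v}_t$ unambiguously.

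For part~(2) I would run the induction using the invariant above. In the base step $j=0$ we have $\hat{S}=0$ and $\tilde{w}=\tilde{V}=(\tilde{v}_0,\ldots,\tilde{v}_{N-1})$, while $(v_0,\ldots,v_{N-1})=(u_0,\ldots,u_{N-1})\hat{G}\in\cB$ by\eqnref{e-vu}; the error hypothesis gives $\dd(\tilde{w},(v_0,\ldots,v_{N-1}))\leq\lfloor d/2\rfloor$, so the decoder's guarantee forces $\hat{w}$ to be the unique codeword within radius $\lfloor d/2\rfloor$, namely $(v_0,\ldots,v_{N-1})$ itself—here I would invoke that two distinct codewords of~$\cB$ are at distance at least $\dist(\cB)\geq d+1$, so the ball of radius $\lfloor d/2\rfloor$ meets~$\cB$ in at most one point. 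For the inductive step, assuming $\hat{u}_t=u_t$ for $t<jL$, the quantity $\hat{S}=(\hat{u}_{jL-m},\ldots,\hat{u}_{jL-1})\tilde{G}=(u_{jL-m},\ldots,u_{jL-1})\tilde{G}$ coincides with the true state contribution, so by\eqnref{e-vu} the true codeword block satisfies $(v_{jL},\ldots,v_{jL+N-1})-\hat{S}=(u_{jL},\ldots,u_{jL+N-1})\hat{G}\in\cB$, and the error hypothesis again places $\tilde{w}$ within $\lfloor d/2\rfloor$ of this genuine codeword of~$\cB$. Uniqueness of the decoded $\hat{w}$ then yields $(\hat{u}_{jL},\ldots,\hat{u}_{jL+N-1})\hat{G}=(u_{jL},\ldots,u_{jL+N-1})\hat{G}$, and full row rank of~$\hat{G}$ upgrades this to $\hat{u}_t=u_t$ for $t=jL,\ldots,jL+N-1$; in particular the returned entries $t=jL,\ldots,(j+1)L-1$ are correct, closing the induction and giving the final assertion $\dd(\tilde{w},\cB)\leq\lfloor d/2\rfloor$ automatically.

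The main obstacle I anticipate is bookkeeping the feedback term~$\hat{S}$ correctly: one must be careful that the state information fed into cycle~$j$ uses only already-decoded (and, under the hypothesis, correct) message coefficients $\hat{u}_t$ with $t<jL$, and that the matrix~$\tilde{G}$ indeed captures the contribution of past inputs to the current window exactly as in\eqnref{e-vu}. The case distinction $N\leq m$ versus $N>m$ in the definition of~$\tilde{G}$ needs a brief check to confirm\eqnref{e-vu} holds in both regimes, but this is the routine part. The genuinely essential step is the uniqueness argument: that membership of the shifted true block in~$\cB$ together with $\dist(\cB)\geq d+1$ forces the bounded-distance decoder to return exactly that block—everything else is the propagation of this correctness through the recursion.
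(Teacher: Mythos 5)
Your overall architecture (induction on the cycle index $j$, with the invariant that all data returned so far is correct, plus the observation that the update\eqnref{e-uyhat} reproduces the convolution with~$G$, which gives part~(1) without any error hypothesis) matches the paper. But the step you yourself single out as ``genuinely essential'' — the uniqueness argument — contains a real error. You invoke $\dist(\cB)\geq d+1$ to conclude that the ball of radius $\lfloor d/2\rfloor$ around $\tilde{w}$ meets~$\cB$ in at most one codeword, and hence that the decoder must return exactly the true block $V-\hat{S}$, giving $\hat{u}_t=u_t$ on the \emph{whole} window $t=jL,\ldots,jL+N-1$. That inequality is backwards: the setup only assumes $d\geq\dist(\cB)-1$, and the interesting regime is precisely $d\gg\dist(\cB)-1$. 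Example~\ref{e_dchoice} is an explicit counterexample: there $\dist(\cB)=2$ while $d=n-1$. Condition\eqnref{e-errorass} does \emph{not} say that a codeword of~$\cB$ of weight at most~$d$ is zero — only that its first~$L$ blocks vanish. So several codewords of~$\cB$ may lie within $\lfloor d/2\rfloor$ of~$\tilde{w}$, the decoder may legitimately return one different from $V-\hat{S}$, and your conclusion that the full window of message blocks is recovered is false in general; this is exactly why Algorithm~\ref{A-Dec} discards the tail of each window and returns only the first~$L$ blocks.

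The repair is what the paper does, and it is strictly weaker but sufficient: from $\dd(\hat{w},\tilde{w})\leq\lfloor d/2\rfloor$ (decoder guarantee) and $\dd(\tilde{w},V-\hat{S})\leq\lfloor d/2\rfloor$ (error hypothesis plus induction hypothesis on~$\hat{S}$), the triangle inequality gives $\dd\big(\hat{w},V-\hat{S}\big)\leq d$. Since $\hat{w}-(V-\hat{S})$ is itself a codeword of~$\cB$ of weight at most~$d$, property\eqnref{e-errorass} forces its first~$L$ blocks to vanish, i.e.\ $\hat{w}_t=v_{jL+t}-\hat{S}_t$ for $t=0,\ldots,L-1$ only. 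Adding back~$\hat{S}$ gives $\hat{v}_{jL+t}=v_{jL+t}$ for those~$t$, and the full row rank of~$G_0$ (block triangularity of~$\hat{G}$) then determines the first~$L$ message blocks uniquely, so $\hat{u}_{jL+t}=u_{jL+t}$ for $t=0,\ldots,L-1$ — precisely the data the algorithm returns, which closes the induction. Your feedback-term bookkeeping for~$\hat{S}$ and the base case are fine; it is only the uniqueness claim that must be replaced by this triangle-inequality-plus-\eqnref{e-errorass} argument.
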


Notice that, due to Remark~\ref{R-vweight}(a), for any received word $\tilde{v}\in\F[\![z]\!]^n$ there exists at most one codeword
$v\in\cC$ satisfying\eqnref{e-error}.

\begin{proof}
(1) Assume that for some $j\geq0$ we have already computed the data $\hat{u}_t,\,\hat{v}_t,\,t=0,\ldots,jL-1$
and that
\begin{equation}\label{e-stepass}
   \sum_{t=0}^{jL-1}\hat{v}_tz^t=\Big((\sum_{t=0}^{jL-1}\hat{u}_tz^t)G\Big)_{[0,jL-1]},
\end{equation}
which, for $j=0$, is an empty assumption.
The next step of the algorithm produces
\[
  (\hat{v}_{jL},\ldots,\hat{v}_{jL+N-1})=(\hat{u}_{jL},\ldots,\hat{u}_{jL+N-1})\hat{G}
  +(\hat{u}_{jL-m},\hat{u}_{jL-m+1}\ldots,\hat{u}_{jL-1})\tilde{G},
\]
see\eqnref{e-uyhat}.
Hence $\hat{v}_{jL+t}=\sum_{i=0}^{t}\hat{u}_{jL+i}G_{t-i}+\sum_{i=0}^{m-t-1}\hat{u}_{jL-1-i}G_{t+1+i}$, where the second sum is zero if $t\geq m$.
In either case, we derive $\hat{v}_{jL+t}=\sum_{i=0}^m \hat{u}_{jL+t-i}G_i$ for $t=0,\ldots,N-1$ and
together with\eqnref{e-stepass} this shows that
$\sum_{t=0}^{jL+N-1}\hat{v}_tz^t=\Big((\sum_{t=0}^{jL+N-1}\hat{u}_tz^t)G\Big)_{[0,jL+N-1]}$.
In particular,~\eqnref{e-stepass} is true for $j+1$ instead of~$j$ (recall that the algorithm only returns $v_{jL},\ldots,v_{(j+1)L-1}$).
This completes the proof of~(1).
\\
(2)
Suppose that for some fixed $j\geq0$ the algorithm correctly returned
$\hat{v}_t=v_t$ and $\hat{u}_t=u_t$ for all $t\leq jL-1$.
Put $V:=(v_{jL},\ldots,v_{jL+N-1})$.
Write $\tilde{w}=(\tilde{w}_0,\ldots,\tilde{w}_{N-1})$ and $\hat{S}=(\hat{S}_0,\ldots,\hat{S}_{N-1})$ for the data in Step~1 of the algorithm.
By\eqnref{e-vu} we have
\[
   V=(u_{jL},\ldots,u_{jL+N-1})\hat{G}+(u_{jL-m},\ldots,u_{jL-1})\tilde{G}=(u_{jL},\ldots,u_{jL+N-1})\hat{G}+\hat{S},
\]
where the second identity follows from $u_t=\hat{u}_t$ for $t\leq jL-1$.
Hence $V-\hat{S}\in\cB$.
The error assumption\eqnref{e-error} implies $\dd\big((\tilde{V}-\hat{S}),(V-\hat{S})\big)=\dd(\tilde{V},V)\leq\lfloor d/2\rfloor$.
Thus, $\dd(\tilde{w},\cB)\leq\lfloor d/2\rfloor$ for $\tilde{w}=\tilde{V}-\hat{S}$
and the decoding requirement made in Step~1)
implies $\dd(\hat{w},\tilde{w})\leq\lfloor d/2\rfloor$ for the decoded word
$\hat{w}=(\hat{w}_0,\ldots,\hat{w}_{N-1})=(\hat{u}_{jL},\ldots,\hat{u}_{jL+N-1})\hat{G}\in\cB$.
As a consequence, $\dd(\hat{w},V-\hat{S})\leq d$ and\eqnref{e-errorass} implies $\hat{w}_{t}=v_{jL+t}-\hat{S}_t$ for
$t=0,\ldots,L-1$. But then\eqnref{e-uyhat} yields that $v_{jL+t}=\hat{v}_{jL+t}$ for $t=0,\ldots,L-1$.
Finally, the uniqueness of the associated message sequence (or the full row rank of~$\hat{G}$) implies
$u_{jL+t}=\hat{u}_{jL+t}$ for $t=0,\ldots,L-1$.
\end{proof}

Notice that the algorithm will, in each cycle, decode a string of~$L$ consecutive codeword blocks.
The most interesting case will be $L=1$, which will lead to a possibly larger~$d$ satisfying\eqnref{e-errorass} and
thus to a larger amount of errors that can be corrected.
In the next section we will concentrate on that case.

\section{Partial Decoding of the Block Code ${\mathcal B}$}\label{S-BlockAlg}
\setcounter{equation}{0}
The main step of Algorithm~\ref{A-Dec} is the partial decoding with respect to the block code~$\cB$ from\eqnref{e-Ghat}.
In this section we will show how to carry out this step efficiently for a particular class of convolutional codes,
which were designed in~\cite{GS06p}.
For those codes the decoding step essentially amounts to decoding certain Reed-Solomon block codes.
The codes can be defined as follows.

Let $\F=\F_q$ be a field with~$q$ elements and primitive element~$\alpha$.
Put $A=\F[x]/_{(x^n-1)}$, where $n:=q-1$, and let
$\fv:\; A\longrightarrow \F^n,\ \sum_{i=0}^{n-1}f_i x^i\longmapsto (f_0,\ldots,f_{n-1})$
be the canonical vector space isomorphism between~$A$ and~$\F^n$.
Fix $k\in\{0,\ldots,n-1\}$ and consider the $\F$-algebra automorphism $\sigma\,: A\longrightarrow A$ defined by
$\sigma(x)=\alpha^kx$.
It is easy to see that this does indeed define an $\F$-algebra automorphism on~$A$.
The following has been shown in \cite[Exa.~3.2, Thm.~3.3, Thm.~4.3, Lem.~3.5 and its proof]{GS06p}.
\begin{theo}\label{P-DCCC} Put $n:=q-1$ and let $k\leq\lfloor n/2\rfloor$ and $m\leq\lfloor n/k\rfloor-1$.
Put $f:=\prod_{i=0}^{n-k-1}(x-\alpha^i)\in A$ and define
\[
   G_j=\begin{pmatrix}\fv\big(\sigma^j(f)\big)\\ \fv\big(\sigma^j(xf)\big)\\ \vdots\\
           \fv\big(\sigma^j(x^{k-1}f)\big)\end{pmatrix}\in\F^{k\times n}.
\]
Then
\begin{arabiclist}
\item The matrix $G=\sum_{j=0}^m G_j z^j\in\F[z]^{k\times n}$ is basic and reduced with all Forney indices equal to~$m$.
      In particular,~$m$ is the memory of the code.
\item The free distance of the convolutional code $\cC=\im G\subseteq\F[z]^n$ is $\dfree(\cC)=(m+1)(n-k+1)$.
\item For $j=0,\ldots,m$ the block code $\cB_j:=\im G_{j,0}\subseteq\F^n$, where
      \begin{equation}\label{e-Gji}
         G_{j,0}:=\begin{pmatrix}G_j\\G_{j-1}\\ \vdots\\ G_0\end{pmatrix}\in\F^{(j+1)k\times n},
      \end{equation}
      is a Reed-Solomon code of dimension $(j+1)k$ with generator polynomial
      $\prod_{i=0}^{n-(j+1)k-1}(x-\alpha^i)$.
      In particular, $\cB_j$ has distance $d_j:=n-(j+1)k+1$.
      Notice that $d_j\geq1$ for all~$j$ due to $j\leq m\leq \lfloor n/k\rfloor-1$.
\end{arabiclist}
The code $\cC=\im G\subseteq\F[\![z]\!]^n$ is called a doubly cyclic convolutional code.
\end{theo}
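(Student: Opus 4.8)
The plan is to work throughout inside $A=\F[x]/(x^n-1)$, identifying $\F^n$ with $A$ via $\fv$ and using that $x^n-1=\prod_{i=0}^{n-1}(x-\alpha^i)$ because $\alpha$ has order $n$. I would prove (3) first, since the Reed--Solomon description of the $\cB_j$ drives everything else. The rows of $G_\ell$ are $\fv(\sigma^\ell(x^sf))=\alpha^{\ell ks}\fv(x^s\sigma^\ell(f))$, so their $\F$-span is $\fv$ of $\{p\,\sigma^\ell(f)\mid\deg p<k\}$; as $\deg\sigma^\ell(f)=n-k$ and $\sigma^\ell(f)\mid x^n-1$, this is exactly the ideal $(\sigma^\ell(f))$, i.e.\ the cyclic RS code with root set $T_\ell=\{\alpha^{r-\ell k}\mid 0\le r\le n-k-1\}$, of exponents $\{-\ell k,\dots,-\ell k+n-k-1\}$. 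Since $\cB_j=\sum_{\ell=0}^j\im G_\ell$ is the cyclic code whose root set is $\bigcap_{\ell=0}^jT_\ell$, I compute that each complement $\Z/n\setminus T_\ell$ is the block $\{-(\ell+1)k,\dots,-\ell k-1\}$ of $k$ consecutive exponents; because $(j+1)k\le n$ (this is exactly where $m\le\lfloor n/k\rfloor-1$ is used) these $j+1$ blocks are disjoint and tile $\{-(j+1)k,\dots,-1\}$, so $\bigcap_\ell T_\ell=\{0,\dots,n-(j+1)k-1\}$. This yields the stated generator polynomial, dimension $(j+1)k$, and — the roots being $n-(j+1)k$ \emph{consecutive} powers — the MDS distance $d_j=n-(j+1)k+1$.

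For (1), reducedness is the easy half: each row of $G$ has $z$-degree exactly $m$ since its top coefficient $\fv(\sigma^m(x^sf))$ is nonzero, so the row degrees sum to $km$, and the leading row-coefficient matrix is $G_m$, whose rows $\fv(\sigma^m(x^sf))$ are independent (distinct degrees $n-k,\dots,n-1$); full row rank of $G_m$ gives reducedness by the standard criterion, whence all Forney indices equal $m$ and the memory is $m$. For basicness I would show $\rank G(\lambda)=k$ for every $\lambda\in\overline{\F}$. A rank drop provides a nonzero $c(x)$ of degree $<k$ with $\sum_{j=0}^m\lambda^j\sigma^j(cf)=0$ in $\overline{\F}[x]/(x^n-1)$. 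Putting $b:=cf$, a nonzero codeword of $\cB_0$ (so $\wt(b)\ge n-k+1$ by (3), still over $\overline{\F}$) and using $\sigma^j(b)(x)=b(\alpha^{jk}x)$, the relation becomes $\sum_i b_i\,P(\lambda\alpha^{ki})\,x^i=0$ with $P(y)=\sum_{j=0}^m y^j$. Thus $P$ vanishes at $\lambda\alpha^{ki}$ for every $i$ in the support of $b$; but $P$ has at most $m$ roots and each equation $\alpha^{ki}=\text{const}$ has at most $\gcd(k,n)$ solutions, so $|\mathrm{supp}(b)|\le m\gcd(k,n)\le n-k$, using $m\le(n-k)/k$ and $\gcd(k,n)\le k$. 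This contradicts $\wt(b)\ge n-k+1$, so no rank drop occurs and $G$ is basic.

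Part (2) is where the real work lies. The upper bound $\dfree(\cC)\le(m+1)(n-k+1)$ is immediate: take $u=u_0$ supported only at time $0$ with $b_0:=\fv^{-1}(u_0G_0)$ a minimum-weight codeword of the MDS code $\cB_0$; then $v_t=\fv(\sigma^t(b_0))$ for $t=0,\dots,m$ and $v_t=0$ otherwise, and since $\sigma$ preserves Hamming weight, $\wt(v)=(m+1)(n-k+1)$. The matching lower bound is the main obstacle. Here I would pass to the rotated blocks $w_t:=\sigma^{-t}\fv^{-1}(v_t)$, which have $\wt(w_t)=\wt(v_t)$ and, writing $g_i:=\sigma^{-i}(U_if)$ for the message polynomials $U_i$, unfold as the sliding-window sums $w_t=\sum_{i=t-m}^{t}g_i$. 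The structural key is that $g_i\in(\sigma^{-i}f)$ has Mattson--Solomon spectrum supported on the single block $N_i=\{(i-1)k,\dots,ik-1\}$ of $k$ consecutive frequencies, and that $(m+1)k\le n$ forces the $m+1$ blocks $N_{t-m},\dots,N_t$ entering one window to be pairwise disjoint. Hence each nonzero $w_t$ lies in a cyclic code whose root set contains the complement of the active blocks, and its weight is bounded below by a BCH-type estimate governed by the gaps between those blocks. The delicate point — and the step I expect to be genuinely hard — is to show that summing these per-window estimates over all $t$ always totals at least $(m+1)(n-k+1)$, with equality forced precisely by the single-block codewords constructed above; the constraint $(m+1)k\le n$ is exactly what makes this accounting succeed, and carrying it through carefully (or invoking the convolutional root bound of~\cite{GS06p}) completes the proof.
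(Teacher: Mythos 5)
The first thing to note is that the paper contains no proof of this theorem at all: it is imported wholesale from~\cite{GS06p} (``The following has been shown in [Exa.~3.2, Thm.~3.3, Thm.~4.3, Lem.~3.5 and its proof]''), so your attempt can only be judged on its own merits, not against an in-paper argument. On those merits, your parts (1) and (3) and the upper bound in part (2) are correct and self-contained: the identification of the row space of $G_\ell$ with the ideal $(\sigma^\ell(f))$, the computation $\bigcap_{\ell=0}^{j}T_\ell=\{0,\dots,n-(j+1)k-1\}$ (where $(j+1)k\le n$ gives disjointness of the complements), reducedness via full rank of $G_m$, and the basicness argument pitting $|\mathrm{supp}(b)|\le m\gcd(k,n)\le n-k$ against the BCH bound $\wt(b)\ge n-k+1$ (which is indeed valid over $\overline{\F}$) are all sound.

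The genuine gap is the lower bound $\dfree(\cC)\ge(m+1)(n-k+1)$ in part (2), which you yourself flag as unproven; falling back on ``the convolutional root bound of~\cite{GS06p}'' concedes the point, since part (2) \emph{is} the theorem of~\cite{GS06p} being restated. Worse, the mechanism you sketch fails for interior windows: there the active spectral blocks $N_i$ need not be contiguous, so the only guaranteed run of consecutive roots of $w_t$ is the complement of the full tiled interval, of length $n-(m+1)k$ --- when $(m+1)k=n$ this yields only $\wt(w_t)\ge1$, nothing like $n-s_tk+1$ --- and no summation over $t$ is ever carried out. The repair is to discard the interior windows entirely and anchor at the two ends. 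Reduce to a polynomial message $u$ with $u_0\ne0\ne u_D$ (possible by basicness and shifting). For $t\le\min(m,D)$ the spectrum of $w_t$ lies in the \emph{contiguous} interval $\bigcup_{i=0}^{t}N_i$ of length $(t+1)k$, and disjointness of the blocks keeps the $N_0$-component $g_0\ne0$ alive, so $w_t\ne0$ and $\wt(v_t)\ge n-(t+1)k+1=d_t$; symmetrically for the windows $t\ge D$, anchored at $N_D$. Summing only these end contributions gives $(n+1)(D+m+1)-k(D+1)(m+1)$ when $D\le m$, and $2(m+1)(n+1)-k(m+1)(m+2)$ when $D>m$; in both cases the desired inequality $\ge(m+1)(n-k+1)$ reduces exactly to $k(m+1)\le n+1$, which holds because $m+1\le\lfloor n/k\rfloor$. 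With that replacement for your ``delicate point,'' your outline becomes a complete proof; without it, part (2) is not established.
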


Let us briefly comment on the notion of cyclicity.
The convolutional code~$\cC$ is a cyclic convolutional code in the sense of \cite{GS04}.
Indeed, it can be shown that~$\cC$ may be identified with the left ideal generated by the polynomial
$g:=\sum_{j=0}^mz^j\sigma^j(f)$ in the skew-polynomial ring $A[z;\sigma]$, see also \cite[p.~165]{GS06p}.
Due to the additional cyclic structure of the block codes $\cB_j$ these codes have been named doubly cyclic in~\cite{GS06p}.
The description as left ideals in $A[z;\sigma]$, however, is not needed for this paper.
It is worth mentioning that for $k=1$, part~(2) of the theorem above shows that the codes satisfy the generalized
Singleton bound for convolutional codes~\cite{RoSm99} and thus are MDS codes.
In~\cite[p.~162]{GS06p} it has been shown that for $k=2$ the codes attain the Griesmer bound.
Thus, the codes have the best possible distance among all codes of the same length, dimension, degree, and field size;
for the Griesmer bound see~\cite[Sec.~3.5]{JoZi99} for the binary case and~\cite[Thm.~3.4]{GS03} for the general case.

We will consider the decoding algorithm of the previous section with processing depth $N:=m+1$.
Thus,
\begin{equation}\label{e-tildeB}
 \cB:=\im\hat{G}, \text{ where }
 \hat{G}:=\begin{pmatrix}G_0&G_1&\ldots&G_m\\ &G_0&\ldots&G_{m-1}\\
     & &\ddots&\vdots\\ & & &G_0\end{pmatrix}\in\F^{(m+1)k\times(m+1)n}.
\end{equation}
Due to the full row rank of the matrices in\eqnref{e-Gji} this code has the following property.
If $v=(v_0,\ldots,v_m)\in\cB$ such that $v_0=\ldots=v_{L-1}=0\not=v_{L}$ for some~$L$, then
$v_j\not=0$ for $j=L,\ldots,m$ and $\wt(v)\geq \sum_{j=0}^{m-L}d_j$, where~$d_j$ is as in Theorem~\ref{P-DCCC}(3).
As a consequence,\eqnref{e-errorass} turns into the following property.
\begin{rem}\label{R-WeightAss}
Let $L\in\{1,\ldots,m+1\}$ and $v:=(v_0,\ldots,v_m)\in\cB$.
Then $\wt(v)\leq \sum_{j=0}^{m-L+1}d_j -1$ implies $(v_0,\ldots,v_{L-1})=0$.
In particular, for $L=1$ we have
\begin{equation}\label{e-d}
  \wt(v)\leq d\Longrightarrow v_0=0,
\end{equation}
where
\begin{equation}\label{e-d2}
    d:=\sum_{j=0}^m d_j -1.
\end{equation}
It is worth mentioning that in concrete examples,~$d$ as in\eqnref{e-d2} might not be the largest value satisfying\eqnref{e-d}.
Indeed, for $\F=\F_7,\,n=6,\,k=2$, and $m=2$ one has $d_0+d_1+d_2-1=8$, but using some weight-computing routines one can show that
the largest~$d$ satisfying\eqnref{e-d} is~$10$.
However, Algorithm~\ref{A-DecBlock} presented below will be able to correct $\lfloor d/2\rfloor$ errors, where~$d$
is as in\eqnref{e-d2}.
Therefore we will not be concerned with optimizing the value of~$d$.
Notice also that
\begin{equation}\label{e-ddfree}
    d=(m+1)(n-k+1)-k\frac{(m+1)m}{2}-1=\dfree(\cC)-k\frac{m(m+1)}{2}-1.
\end{equation}
\end{rem}

Let us now turn to Algorithm~\ref{A-Dec}.
The main part in Step~1) consists of achieving the following task:
given a received vector $\tilde{v}:=(\tilde{v}_0,\ldots,\tilde{v}_m)\in\F^{(m+1)n}$ satisfying
$\dd(\tilde{v},\cB)\leq\lfloor d/2\rfloor$, return a vector $(\hat{v}_0,\ldots,\hat{v}_{L-1})$
for which there exists an extension $\hat{v}:=(\hat{v}_0,\ldots,\hat{v}_m)\in\cB$
satisfying $\dd(\hat{v},\tilde{v})\leq\lfloor d/2\rfloor$.
In the following algorithm we will carry this out for step size $L=1$.
Recall from Remark~\ref{R-WeightAss} that the parameter~$d$ from\eqnref{e-errorass} can be made largest for $L=1$
and therefore this will allow us to correct the largest amount of errors.

Throughout the rest of the paper, the phrase Reed-Solomon decoding will refer to any of the algebraic decoding algorithms for Reed-Solomon codes that
correct up to~$t$ errors, where~$t$ is the error-correcting bound of the code.
If such decoding is not possible, the algorithm returns an error message.

\begin{alg}\label{A-DecBlock}
Let the data be as in Theorem~\ref{P-DCCC} and\eqnref{e-tildeB}
and let $\tilde{v}:=(\tilde{v}_0,\ldots,\tilde{v}_m)\in\F^{(m+1)n}$.
\\[.5ex]
Put $l=m+1$.
\\[.5ex]
\hspace*{.5cm}\begin{minipage}{15.5cm}
   \underline{Step 1:} $l:= l-1$.
   \\[.5ex]
   \underline{Step 2:}
   Use Reed-Solomon decoding to decode $\tilde{v}_l$ with respect to the block code~$\cB_l$.
   If $\dd(\tilde{v}_l,\im G_{l,0})>\lfloor(d_l-1)/2\rfloor$, that is, Reed-Solomon decoding is not possible, go to Step 1.
   Else denote the resulting codeword by $w^{(l)}_l\in\cB_l=\im G_{l,0}$ and
   let $w^{(l)}_l=(\hat{x}_0,\ldots,\hat{x}_l)G_{l,0}$.
   Compute
   \begin{equation}\label{e-wl}
      w^{(l)}=(w^{(l)}_0,w^{(l)}_1,\ldots,w^{(l)}_l):=
         (\hat{x}_0,\ldots,\hat{x}_l)
         \begin{pmatrix}G_0&G_1&\ldots&G_l\\ &G_0&\ldots&G_{l-1}\\
         & &\ddots&\vdots\\ & & &G_0\end{pmatrix}.
   \end{equation}
   \underline{Step 3:} If $\dd\big(w^{(l)},(\tilde{v}_0,\ldots,\tilde{v}_l)\big)\leq\lfloor (\sum_{i=0}^l d_i -1)/2\rfloor$,
   then return $w^{(l)}_0$ and~$\hat{x}_0$.
   Else go to Step~1.
\end{minipage}
\\[.5ex]
If none of these steps yields a return, that is, $l=0$ and $\dd(w^{(0)},\tilde{v}_0)>\lfloor (d_0 -1)/2\rfloor$,
then do the following:
\\[.5ex]
\underline{Case~a):} Suppose Step~2 has been executed at least once.
Then, for each of the partial codewords $w^{(l)}$ in\eqnref{e-wl} produced in the various cycles of Step~2
use list decoding with respect to the code $\im G_0$ in order to find a codeword $y_{l+1}=\hat{x}_{l+1}G_0\in\im G_0$
that is closest to $\tilde{v}_{l+1}-\sum_{i=0}^l\hat{x}_i G_{l+1-i}$.
Set $l$ to $l+1$ and proceed the same way until $l=m$.
Put $\bar{w}^{(l)}=(\hat{x}_0,\ldots,\hat{x}_m)\hat{G}\in\cB$.
Among all the codeword $\bar{w}^{(l)}\in\cB$ produced this way choose one closest to~$\tilde{v}$, say
$w'=(w'_0,\ldots,w'_{m})=(x'_0,\ldots,x'_m)\hat{G}$, and return $w'_0$ and $x'_0$.
\\
\underline{Case~b):} Suppose Step~2 has never been executed. In this case, $\dd(\tilde{v}_l,\im G_{l,0})>\lfloor(d_l-1)/2\rfloor$
for all $l=0,\ldots,m$.
Use list decoding to produce a codeword $w^{(0)}=\hat{x}_0G_0\in\im G_0$ closest to $\tilde{v}_0$.
Set $l=0$ and proceed as in Case~a).
\end{alg}

In the next theorem we will see that Case~a) or~b) will only be invoked if no codeword~$v\in\cB$ satisfies
$\dd(\tilde{v},v)\leq\lfloor d/2\rfloor$.
When calling Algorithm~\ref{A-DecBlock} in Step~1) of Algorithm~\ref{A-Dec} this amounts to the fact that no convolutional
codeword~$v\in\cC$ satisfies the familiar error assumption\eqnref{e-error}.
Of course, if $\dd(\tilde{v},\,\cB)>\lfloor d/2\rfloor$, there are various options of how to proceed.
The easiest and cheapest solution would be to simply return any codeword $w_0=\hat{x}_0G_0$ along with its message~$\hat{x}_0$.
The strategy outlined in Algorithm~\ref{A-DecBlock} requires more effort and is designed to result in a codeword that is more likely
to be close (or even closest) to~$\tilde{v}$.
Indeed, notice that, by construction, the words
\[
  {\TS \bar{w}^{(l)}=(w^{(l)}_0,\,w^{(l)}_1,\,\ldots,\,w^{(l)}_l,\,y_{l+1}+\sum_{i=0}^l\hat{x}_i G_{l+1-i},\,\ldots,\,y_m+\sum_{i=0}^{m-1}\hat{x}_i G_{m-i})}
\]
are codewords in~$\cB$ for which the last $m-l+1$ blocks~$\bar{w}^{(l)}_i$ are codewords in~$\cB_i$ close to
the corresponding block $\tilde{v_i}$.
However, this does not guarantee that the chosen codeword will be closest to~$\tilde{v}$.
We would also like to point out that for the list decoding, see~\cite{Su97,GuSu99,RoRu00}, used in Cases~a) and~b) one might have to
increase successively the list size in order to have a nonempty return.
If more than one codeword~$y_{l+1}$ is returned one could even use all of them and extend them in the described way.
Finally it is worth mentioning that in Step~2) of the algorithm one could also replace Reed-Solomon decoding by list decoding in order to
produce a bigger pool of codewords and enhance the chances of early success in Step~3.

\begin{theo}\label{T-DecBlock}
Let~$d$ be as in\eqnref{e-d2}.
Suppose $v=(v_0,\ldots,v_m)\in\cB$ has been sent and $\tilde{v}:=(\tilde{v}_0,\ldots,\tilde{v}_m)\in\F^{(m+1)n}$ has been received.
If $\dd(v,\tilde{v})\leq \lfloor d/2\rfloor$, then there exists $l\in\{m,m-1,\ldots,0\}$ such that Step~2 of
Algorithm~\ref{A-DecBlock} will be carried out and
$\dd\big(w^{(l)},(\tilde{v}_0,\ldots,\tilde{v}_l)\big)\leq\lfloor (\sum_{i=0}^l d_i -1)/2\rfloor$.
In this case, Step~3 will return the correct data, that is, $w^{(l)}_0=v_0$.
\\
As a consequence, if Algorithm~\ref{A-DecBlock} has to turn to Case~a) or~b), it has detected an error in the sense that $\dd(\tilde{v},\cB)>\lfloor d/2 \rfloor$.
\end{theo}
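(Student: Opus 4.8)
The plan is to reduce everything to bookkeeping of the block-wise errors. Write the sent word as $v=(u_0,\dots,u_m)\hat G$, set $e_i:=\dd(v_i,\tilde v_i)$, and abbreviate $S_l:=\sum_{i=0}^l e_i$ and $D_l:=\sum_{i=0}^l d_i$; by\eqnref{e-d2} the hypothesis reads $S_m=\dd(v,\tilde v)\le\lfloor d/2\rfloor=\lfloor(D_m-1)/2\rfloor$. Two facts will do the work. (F1) Each block satisfies $v_l=(u_0,\dots,u_l)G_{l,0}\in\cB_l$, and since $G_{l,0}$ has full row rank, whenever Reed--Solomon decoding at level $l$ returns $v_l$ the vector $(\hat x_0,\dots,\hat x_l)$ of\eqnref{e-wl} equals $(u_0,\dots,u_l)$, so that $w^{(l)}=(v_0,\dots,v_l)$ exactly. (F2) The code $\cB^{(l)}$ generated by the matrix in\eqnref{e-wl} (the top-left $(l+1)$-block corner of $\hat G$) satisfies the analogue of the property preceding Remark~\ref{R-WeightAss}: if $c=(c_0,\dots,c_l)\in\cB^{(l)}$ has $c_0\ne0$, then $c_j\ne0$ for every $j$ and hence $\wt(c)\ge\sum_{j=0}^l d_j=D_l$.

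First I would produce a level at which Steps~2 and~3 both succeed. Let $l^\ast$ be the largest index with $e_{l^\ast}\le\lfloor(d_{l^\ast}-1)/2\rfloor$; it exists, for otherwise $e_i\ge\lceil d_i/2\rceil$ for all $i$ and then $S_m\ge\sum_i\lceil d_i/2\rceil\ge D_m/2>\lfloor(D_m-1)/2\rfloor$, contradicting the hypothesis. At level $l^\ast$ we have $\dd(\tilde v_{l^\ast},\cB_{l^\ast})\le e_{l^\ast}\le\lfloor(d_{l^\ast}-1)/2\rfloor$, so bounded-distance Reed--Solomon decoding returns the unique nearby codeword $v_{l^\ast}$, and (F1) gives $w^{(l^\ast)}=(v_0,\dots,v_{l^\ast})$. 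Maximality forces $e_i\ge\lceil d_i/2\rceil\ge d_i/2$ for all $i>l^\ast$, so $\sum_{i>l^\ast}e_i\ge\tfrac12(D_m-D_{l^\ast})$ and hence $S_{l^\ast}=S_m-\sum_{i>l^\ast}e_i\le\tfrac{D_m-1}{2}-\tfrac{D_m-D_{l^\ast}}{2}=\tfrac{D_{l^\ast}-1}{2}$, i.e.\ $S_{l^\ast}\le\lfloor(D_{l^\ast}-1)/2\rfloor$. Since $\dd(w^{(l^\ast)},(\tilde v_0,\dots,\tilde v_{l^\ast}))=S_{l^\ast}$, the Step~3 test passes at $l^\ast$ and returns $w^{(l^\ast)}_0=v_0$. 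In particular the algorithm never reaches Case~a) or~b); contrapositively, reaching those cases forces $\dd(\tilde v,\cB)>\lfloor d/2\rfloor$, which is the concluding assertion.

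The delicate point, which I expect to be the main obstacle, is that the algorithm actually halts at the \emph{largest} level $\ell$ passing both tests, and this $\ell$ may exceed $l^\ast$. If $\ell=l^\ast$ the previous paragraph applies. If $\ell>l^\ast$, then $e_\ell>\lfloor(d_\ell-1)/2\rfloor$, so Reed--Solomon decoding may well return a codeword different from $v_\ell$ and (F1) is unavailable; here I would argue by contradiction using (F2). Suppose the returned $w^{(\ell)}_0\ne v_0$. Then $c:=w^{(\ell)}-(v_0,\dots,v_\ell)\in\cB^{(\ell)}$ has $c_0\ne0$, so by (F2) and the triangle inequality $D_\ell\le\wt(c)\le\dd(w^{(\ell)},(\tilde v_0,\dots,\tilde v_\ell))+S_\ell\le\lfloor(D_\ell-1)/2\rfloor+S_\ell$, forcing $S_\ell\ge\lceil(D_\ell+1)/2\rceil$. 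Consequently $\sum_{i>\ell}e_i=S_m-S_\ell<\tfrac12(D_m-D_\ell)=\tfrac12\sum_{i>\ell}d_i$, so some index $i^\ast>\ell$ has $e_{i^\ast}<d_{i^\ast}/2$, i.e.\ $e_{i^\ast}\le\lfloor(d_{i^\ast}-1)/2\rfloor$. But $i^\ast>\ell>l^\ast$ contradicts the maximality of $l^\ast$. Hence $w^{(\ell)}_0=v_0$ at the true return level too, which completes the proof. The earlier steps are routine distance bookkeeping; the crux is this interplay between the minimum-weight bound (F2) on $\cB^{(\ell)}$ and the pigeonhole estimate anchored at $l^\ast$.
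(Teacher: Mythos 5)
Your proof is correct, and it takes a genuinely different route from the paper's. The paper argues by downward induction along the actual execution of Algorithm~\ref{A-DecBlock}: it maintains the invariant that at every level~$l$ the algorithm reaches one has $\dd\big((v_0,\ldots,v_l),(\tilde{v}_0,\ldots,\tilde{v}_l)\big)\leq\lfloor(\sum_{i=0}^l d_i-1)/2\rfloor$, and propagates this by a case analysis of \emph{why} the algorithm descends: either Step~2 was skipped, or Step~2 ran but Step~3 failed, in which case the paper shows the block error at level~$l$ must exceed $\lfloor(d_l-1)/2\rfloor$ (otherwise Reed-Solomon decoding would have returned $v_l$, hence $w^{(l)}$ would be the truncation of~$v$ and Step~3 could not have failed). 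Correctness at the halting level then follows from the triangle inequality and Remark~\ref{R-WeightAss}, and the invariant at $l=0$ guarantees the algorithm halts at all. You instead make an extremal choice: the largest level $l^\ast$ with $e_{l^\ast}\leq\lfloor(d_{l^\ast}-1)/2\rfloor$, which exists by your averaging argument; both tests pass there, so the algorithm halts at some $\ell\geq l^\ast$ (this step is legitimate because both tests at a given level depend only on $\tilde{v}$ and the deterministic decoder, not on the execution history), and the delicate case $\ell>l^\ast$ is settled by contradiction, playing the weight bound (F2) against the maximality of $l^\ast$ via a second counting argument. Both proofs rest on the same two pillars, namely bounded-distance RS decoding together with the full row rank of $G_{l,0}$ (your (F1), the paper's uniqueness-of-message argument), and the weight property of the truncated code (your (F2), the paper's Remark~\ref{R-WeightAss}); but your global pigeonhole anchored at $l^\ast$ spares you the paper's analysis of why descent occurs, at the price of the separate contradiction argument for $\ell>l^\ast$, whereas the paper's invariant handles all halting levels uniformly and yields as a by-product the estimate that the truncated error assumption holds at every level reached. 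One small point to patch in your last step: when $\ell=m$ the index set $\{i>\ell\}$ is empty, so the inequality $\sum_{i>\ell}e_i<\frac{1}{2}\sum_{i>\ell}d_i$ reads $0<0$; this is already the desired contradiction, but you should say so explicitly rather than extract an index $i^\ast>\ell$ from an empty set.
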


\begin{proof}
Put $d^{(l)}:=\sum_{i=0}^l d_i-1$ for $l=0,\ldots,m$.
Then $d^{(m)}=d$ and by assumption $\dd(v,\tilde{v})\leq\lfloor d^{(m)}/2\rfloor$.
\\[.5ex]
Consider $l=m$.
There are two cases in which Algorithm~\ref{A-DecBlock} will have to proceed to $l-1$: either Step~2 is not executed at all or
the inequality in Step~3 is not satisfied.
In Part~1) and~2) below we will show that in both cases we obtain
\begin{equation}\label{e-nextstep}
  \dd\big((v_0,\ldots,v_{m-1}),(\tilde{v}_0,\ldots,\tilde{v}_{m-1})\big)\leq\lfloor d^{(m-1)}/2\rfloor,
\end{equation}
that is, the sent codeword and the received word satisfy the analogous error assumption on
the time interval $[0,m-1]$.
This will allow us to argue inductively.
In Part~3) we will show that there exists~$l$ for which the algorithm will return a result~$w^{(l)}_0$ and that
$w^{(l)}_0=v_0$.

1) Assume first that Step~2 has not been executed, hence $\dd(\tilde{v}_m,\,\cB_m)\geq\lfloor(d_m-1)/2\rfloor+1$.
Then we have in particular,
\begin{equation}\label{e-vmvmtilde}
    \dd(v_m,\tilde{v}_m)\geq\lfloor(d_m-1)/2\rfloor+1.
\end{equation}
Since $\dd(v,\tilde{v})\leq\lfloor d^{(m)}/2\rfloor$ this yields
\begin{equation}\label{e-vvtildem-1}
   \dd\big(v_0,\ldots,v_{m-1}),(\tilde{v}_0,\ldots,\tilde{v}_{m-1})\big)\leq\lfloor d^{(m)}/2\rfloor -\lfloor(d_m-1)/2\rfloor-1.
\end{equation}
Notice that $d^{(m)}=\sum_{i=0}^{m}d_i -1=\sum_{i=0}^{m-1}d_i+d_m-1=d^{(m-1)}+d_m$.
Going through all four cases of $d^{(m)}$ and $d_m$ being even or odd shows that
\[
   \lfloor d^{(m)}/2\rfloor -\lfloor(d_m-1)/2\rfloor-1\leq \lfloor d^{(m-1)}/2\rfloor.
\]
Hence\eqnref{e-vvtildem-1} implies\eqnref{e-nextstep}.
\\[.5ex]
2) Assume now that $\dd(\tilde{v}_m,\,\cB_m)\leq\lfloor(d_m-1)/2\rfloor$, hence Step~2 has been carried out,
and that
\begin{equation}\label{e-wmerror}
  \dd(w^{(m)},\tilde{v})>\lfloor d^{(m)}/2\rfloor,
\end{equation}
so that the assumption in Step~3 is not satisfied.
But then we may conclude\eqnref{e-vmvmtilde} again.
Indeed, if\eqnref{e-vmvmtilde} were not true, Reed-Solomon decoding of $\tilde{v}_m$ with respect to $\cB_m$ would
have returned the correct codeword~$v_m$ because $\lfloor(d_m-1)/2\rfloor$ is the error-correcting bound of the code~$\cB_m$.
In that case the associated message $\hat{u}\in\F^{(m+1)k}$ satisfying $\hat{u}G_{m,0}=v_m$ is unique
and would have resulted in $\hat{u}\hat{G}=v$.
Hence $w^{(m)}=v$, contradicting\eqnref{e-wmerror}.
Hence\eqnref{e-vmvmtilde} is true and as in~1), we arrive at\eqnref{e-nextstep}.

3) Suppose now that the algorithm proceeded to Step~2 for the value~$l$.
By the preceding discussion, see\eqnref{e-nextstep}, we then have
\begin{equation}\label{e-atl}
  \dd\big((v_0,\ldots,v_l),(\tilde{v}_0,\ldots,\tilde{v}_l)\big)\leq\lfloor d^{(l)}/2\rfloor.
\end{equation}
Assume furthermore that Step~2 has been executed and resulted in a word $w^{(l)}$ such that
$\dd\big(w^{(l)},(\tilde{v}_0,\ldots,\tilde{v}_l)\big)\leq\lfloor d^{(l)}/2\rfloor$.
Then\eqnref{e-atl} implies $\dd\big(w^{(l)},(v_0,\ldots,v_l)\big)\leq d^{(l)}$ and Remark~\ref{R-WeightAss} (applied to the code~$\cB$
in\eqnref{e-tildeB} with~$l$ instead of~$m$) shows that $w^{(l)}_0=v_0$ as desired.
\\
Finally, if, in the worst case, the algorithm has to proceed until $l=0$ we have, by\eqnref{e-atl},
\[
  \dd(v_0,\tilde{v}_0)\leq\lfloor d^{(0)}/2\rfloor=\lfloor (d_0-1)/2\rfloor.
\]
Since this is the error-correcting bound of the code $\cB_0$, decoding of~$\tilde{v}_0$ will take place and
will result in $w^{(0)}_0=v_0$.
This word will indeed be returned in Step~3 of the algorithm.
This completes the proof.
\end{proof}

Observe that Algorithm~\ref{A-DecBlock} might return a codeword $w^{(l)}_0$ in some cycle of Step~3) even if $\dd(\tilde{v},\cB)>\lfloor d/2\rfloor$.
The way the algorithm is formulated this potential decoding error will not be detected.
However, the overall Algorithm~\ref{A-Dec} might detect this situation by checking whether the received word~$\tilde{v}$ and the decoded word
$\hat{v}\in\cC$ satisfy
$\dd\big((\hat{v}_j,\ldots,\hat{v}_{j+m}),(\tilde{v}_j,\ldots,\tilde{v}_{j+m})\big)\leq\lfloor d/2\rfloor$ for all $j\in\N_0$, see\eqnref{e-error}.
One could, of course, extend Algorithm~\ref{A-DecBlock} by using the strategy of Case~a) in order to extend a partial codeword~$w^{(l)}$ to  full codewords
and checking whether one of those is within $\lfloor d/2\rfloor$ of~$\tilde{v}$.
But, as mentioned earlier, there is no guarantee that this strategy will find the closest codeword.

\section{Examples and Comparison to Other Decoding Algorithms}\label{S-ExaComp}
\setcounter{equation}{0}

We will first give some examples illustrating the algorithm.
Thereafter, we will compare the algorithm to other existing algorithms with respect to error-correcting capability and complexity.

Recall the data from Theorem~\ref{P-DCCC}.

\begin{exa}\label{E-F51}
Let $\F=\F_5$ and choose the primitive element~$\alpha=2$. Then $n=4$ and we choose $k=1$ and $m=2$.
Then $f=(x-1)(x-2)(x-4)=x^3+3x^2+4x+2$.
The $\F$-algebra automorphism~$\sigma$ is given by $\sigma(x)=2x$.
Thus, $\sigma(f)=3x^3+2x^2+3x+2$ and $\sigma^2(f)=4x^3+3x^2+x+2$ and we obtain
$G=G_0+G_1z+G_2z^2\in\F[z]^{1\times 4}$,  where
\[
  G_0=\begin{pmatrix}2&4&3&1\end{pmatrix},\
  G_1=\begin{pmatrix}2&3&2&3\end{pmatrix},\
  G_2=\begin{pmatrix}2&1&3&4\end{pmatrix}.
\]
We have to consider the block codes
\[
  \cB_0=\im G_0=\im\begin{pmatrix}2&4&3&1\end{pmatrix},\quad
  \cB_1=\im G_{1,0}=\im\begin{pmatrix}2&3&2&3\\2&4&3&1\end{pmatrix},\
\]
and
\[
  \cB_2=\im G_{2,0}=\im\begin{pmatrix}2&1&3&4\\2&3&2&3\\2&4&3&1\end{pmatrix}=\ker\begin{pmatrix}1\\1\\1\\1\end{pmatrix}.
\]
They have distance $d_0=4,\,d_1=3,$ and $d_2=2$, respectively. Hence $d=8$.
This is indeed the largest possible value for~$d$ satisfying\eqnref{e-d}; check, e.g., the codeword $(1,1,3)\hat{G}$, where~$\hat{G}$ is
as in\eqnref{e-BGhat}.
Thus, the algorithms~\ref{A-Dec}/\ref{A-DecBlock}
will reconstruct the sent codeword if no more than~$4$ errors have happened on any string of~$3$ consecutive
coefficients $(v_j,v_{j+1},v_{j+2})$.
Both the codes $\cB_0$ and $\cB_1$ can correct one error and $\cB_2$ cannot correct any errors.
The code~$\cB$ is given by
\begin{equation}\label{e-BGhat}
  \cB=\im \hat{G},\text{ where }
  \hat{G}=\left(\!\!\begin{array}{cccccccccccc}2&4&3&1&2&3&2&3&2&1&3&4\\0&0&0&0&2&4&3&1&2&3&2&3\\
           0&0&0&0&0&0&0&0&2&4&3&1\end{array}\!\!\right).
\end{equation}
Put
\[
  \tilde{G}=\begin{pmatrix}G_2&0&0\\G_1&G_2&0\end{pmatrix}=
  \left(\!\!\begin{array}{cccccccccccc}2&1&3&4&0&0&0&0&0&0&0&0\\2&3&2&3&2&1&3&4&0&0&0&0
            \end{array}\!\!\right)
\]
and for convenience write
\[
  \hat{G}_1=\begin{pmatrix}G_0&G_1\\0&G_0\end{pmatrix}=
  \left(\!\!\begin{array}{cccccccc}2&4&3&1&2&3&2&3\\0&0&0&0&2&4&3&1\end{array}\!\!\right).
\]
Let us consider the received word $\tilde{v}=(4031)+z(1130)+z^2(3210)+z^3(3213)+z^4(0100)$ and apply Algorithm~\ref{A-Dec} step by step.
\begin{Algolist}
\item $j=0$:
      Then $\tilde{V}=(\tilde{v}_0,\tilde{v}_1,\tilde{v}_2)=(403111303210)$ and $\hat{S}=0$.
      Hence $\tilde{w}=\tilde{V}$.
      Use of Algorithm~\ref{A-DecBlock}:
      \begin{algolist}
      \item $l=2$. We have $\tilde{w}_2\not\in\cB_2$. Since~$\cB_2$ cannot correct any errors, we go to
      \item $l=1$. Decoding $\tilde{w}_1=(1130)$ with respect to~$\cB_1$ yields $w^{(1)}_1=(12)G_{10}\in\cB_1$.
                   Hence no errors need to be corrected.
                   We compute $w^{(1)}=(12)\hat{G}_1=(24311130)$ and
                   $\dd\big(w^{(1)},(\tilde{w}_0,\tilde{w}_1)\big)=2\leq\lfloor (d_0+d_1-1)/2\rfloor$.
                   Thus Alg.~\ref{A-DecBlock} returns $w^{(1)}_0=(2431)$ and $\hat{x}_0=1$.
      \end{algolist}
      Alg.~\ref{A-Dec} returns $\hat{v}_0=(2431)$ and $\hat{u}_0=1$.
\item $j=1$:\
      $\tilde{V}=(\tilde{v}_1,\tilde{v}_2,\tilde{v}_3)=(113032103213),\ \hat{S}=(01)\tilde{G}=(232321340000),\
       \tilde{w}=(431311313213)$.
      Use of Algorithm~\ref{A-DecBlock}:
      \begin{algolist}
      \item $l=2$. $\tilde{w}_2\not\in\cB_2$.
      \item $l=1$. Decoding $\tilde{w}_1=(1131)$ w.r.t.~$\cB_1$ yields $w^{(1)}_1=(12)G_{10}=(1130)\in\cB_1$.
                   We compute $w^{(1)}=(12)\hat{G}_1=(24311130)$ and
                   $\dd\big(w^{(1)},(\tilde{w}_0,\tilde{w}_1)\big)=5\geq\lfloor (d_0+d_1-1)/2\rfloor$.
                   Hence we go to
      \item $l=0$. Decoding $\tilde{w}_0=(4313)$ w.r.t.~$\cB_0$ results in $w^{(0)}_0=2G_0=(4312)\in\cB_0$.
                   Since $\dd(w^{(0)}_0,\tilde{w}_0)=1\leq\lfloor(d_0-1)/1\rfloor$,
                   Alg.~\ref{A-DecBlock} returns $w^{(0)}_0=(4312)$ and $\hat{x}_0=2$.
      \end{algolist}
      Alg.~\ref{A-Dec} returns $\hat{v}_1=(4312)+(2323)=(1130)$ and $\hat{u}_1=2$.
\item $j=2$:\
      $\tilde{V}=(\tilde{v}_2,\tilde{v}_3,\tilde{v}_4)=(321032130100),\ \hat{S}=(12)\tilde{G}=(122042130000),\
      \tilde{w}=\tilde{V}-\hat{S}=(204040000100)$.
      Use of Algorithm~\ref{A-DecBlock}:
      \begin{algolist}
      \item $l=2$. $\tilde{w}_2\not\in\cB_2$.
      \item $l=1$. Decoding $\tilde{w}_1=(4000)$ w.r.t.~$\cB_1$ yields $w^{(1)}_1=(00)G_{10}=(0000)\in\cB_1$.
                   We compute $w^{(1)}=(00)\hat{G}_1=(00000000)$ and
                   $\dd\big(w^{(1)},(\tilde{w}_0,\tilde{w}_1)\big)=3\leq\lfloor (d_0+d_1-1)/2\rfloor$.
                   Alg.~\ref{A-DecBlock} returns $w^{(1)}_0=(0000)$ and $\hat{x}_0=0$.
      \end{algolist}
      Alg.~\ref{A-Dec} returns $\hat{v}_2=(0000)+(1220)=(1220)$ and $\hat{u}_2=0$.
\item $j=3$:
      $\tilde{V}=(\tilde{v}_3,\tilde{v}_4,\tilde{v}_5)=(321301000000),\ \hat{S}=(20)\tilde{G}=(421300000000),\
      \tilde{w}=\tilde{V}-\hat{S}=(400001000000)$.
      Use of Algorithm~\ref{A-DecBlock}:
      \begin{algolist}
      \item $l=2$. $\tilde{w}_2=(0000)=(000)G_{2,0}\in\cB_2$. Hence $w^{(2)}=0\in\F^{12}$  and
                   $\dd(w^{(2)},\tilde{w})=2\leq\lfloor (d_0+d_1+d_2-1)/2\rfloor$.
                    Alg.~\ref{A-DecBlock} returns $w^{(2)}_0=(0000)$ and $\hat{x}_0=0$.
      \end{algolist}
      Alg.~\ref{A-Dec} returns $\hat{v}_3=(0000)+(4213)=(4213)$ and $\hat{u}_3=0$.
\item $j=4$:
      $\hat{S}=(00)\tilde{G}=0$ and
      $\tilde{w}=\tilde{V}=(\tilde{v}_4,\tilde{v}_5,\tilde{v}_6)=(010000000000)$.
      Use of Algorithm~\ref{A-DecBlock}:
      \begin{algolist}
      \item $l=2$. $\tilde{w}_2=(0000)=(000)G_{2,0}\in\cB_2$. Hence $w^{(2)}=0\in\F^{12}$  and
                   $\dd(w^{(2)},\tilde{w})=1\leq\lfloor (d_0+d_1+d_2-1)/2\rfloor$.
                    Alg.~\ref{A-DecBlock} returns $w^{(2)}_0=(0000)$ and $\hat{x}_0=0$.
      \end{algolist}
      Alg.~\ref{A-Dec} returns $\hat{v}_4=(0000)$ and $\hat{u}_4=0$.
\item Thereafter, $\tilde{w}$ is always zero and the Algorithm returns only zeros.
\end{Algolist}
Thus, we found $\hat{u}=\hat{u}_0+\hat{u}_1z=1+2z$ and
$\hat{v}=\sum_{i=0}^3z^i\hat{v}_i=(2431)+z(1130)+z^2(1220)+z^3(4213)$.
As to be expected $\hat{v}=\hat{u}G$ is a codeword.
Moreover, $\dd(\tilde{v}_{[j,j+2]},\hat{v}_{[j,j+2]})\leq 4$ for all $j\geq0$.
Notice also that $\dd(\tilde{v},\hat{v})=6$ for the overall Hamming distance of the polynomial codewords.
Since, due to Theorem~\ref{P-DCCC}(2), $\dfree(\cC)=12$ this shows that $\hat{v}$ is a closest codeword
in~$\cC$ and by some straightforward considerations one can show that it is the unique closest codeword.
\end{exa}

The previous example resulted in a codeword~$v\in\cC$ that is no more than $\lfloor d/2\rfloor$ errors
apart from the received word~$\tilde{v}$ on any window of length~$N=m+1$, thus,~$v$ and~$\tilde{v}$
satisfy\eqnref{e-error}.
This is, of course, due to the fact that there does indeed exist such a codeword~$v$.
But even if no codeword satisfying\eqnref{e-error}
exists the algorithm might return a codeword without having to invoke Case~a) or~b) of Algorithm~\ref{A-DecBlock}.
This is illustrated in parts~(a) and~(b) of the following example.
Part~(c) shows that the algorithm does not necessarily return a codeword closest to~$\tilde{v}$ with respect to the
overall Hamming distance.

\begin{exa}\label{E-F51error}
Consider again the code from Example~\ref{E-F51}.
\begin{alphalist}
\item Let the received word be $\tilde{v}=(2000)+z(4004)+z^2(4000)+z^3(0431)$.
       Then the algorithms~\ref{A-Dec}/\ref{A-DecBlock} will return the codeword $\hat{v}=0$ because of the following:
       \begin{Algolist}
       \item The word $(\tilde{v}_0,\tilde{v}_1,\tilde{v}_2)$ has weight $4$ and thus $v=0\in\cB$ satisfies the error assumption in
             Theorem~\ref{T-DecBlock} and the algorithm returns $\hat{v}_0=0,\,\hat{u}_0=0$.
       \item The word $(\tilde{v}_1,\tilde{v}_2,\tilde{v}_3)$ is 4 errors apart from the codeword $(0,0,1)\hat{G}\in\cB$ and the algorithm returns
             $\hat{v}_1=0$ and $\hat{u}_1=0$.
       \item For $j\geq2$ the words $(\tilde{v}_j,\tilde{v}_{j+1},\tilde{v}_{j+2})$ have weight at most 4 and thus the algorithm returns
             zero.
       \end{Algolist}
       Hence the algorithm returns the codeword $\hat{v}=0$ and due to
       $\dd\big((\tilde{v}_1,\tilde{v}_2,\tilde{v}_3),(\hat{v}_1,\hat{v}_2,\hat{v}_3)\big)=6>\lfloor d/2\rfloor$
       one detects that the error assumption\eqnref{e-error} was not satisfied.
       Of course, this result implies that no codeword $v\in\cC$ satisfies\eqnref{e-error}.
       Again, by some straightforward computations one can show that~$\hat{v}=0$ is the closest codeword in~$\cC$ with respect to the
      overall Hamming distance.
\item Let the received word be $\tilde{v}=(2431)+z(1130)+z^2(0000)+z^3(0200)+z^4(4100)+z^5(0004)+z^6(0003)+z^7(0020)+z^8(0004)+z^9(3400)$.
      Then $\wt(\tilde{v}_j,\tilde{v}_{j+1},\tilde{v}_{j+2})\leq 4$ for all $j\geq1$ and $\wt(\tilde{v}_0,\tilde{v}_1,\tilde{v}_2)=7$.
      Thus, assuming the zero word has been sent the error assumption\eqnref{e-error} is satisfied for all~$j$ except for $j=0$.
      The algorithm will return the codeword
      $\hat{v}=(2431)+z(1130)+z^2(0032)+z^3(0230)+z^4(4100)+z^5(1004)+z^6(0023)+z^7(0320)+z^8(4024)+z^9(3421)
       =(1+2z+2z^2+z^3+4z^4+3z^5+3z^6+4z^7)G$,
      and may compute the values
      $\dd\big((\hat{v}_j,\hat{v}_{j+1},\hat{v}_{j+2}),(\tilde{v}_j,\tilde{v}_{j+1},\tilde{v}_{j+2})\big)=2,3,3,2,2,3,4,5,4,2$ for $j=0,\ldots,9$.
      By checking those distances, the algorithm detects that no codeword in~$\cC$ satisfies the error assumption\eqnref{e-error}.
      Notice also that $\dd(\tilde{v},0)=16$ while $\dd(\tilde{v},\hat{v})=10$.
      Again, by some lengthy, but straightforward computations one can show that~$\hat{v}$ is the unique closest codeword in~$\cC$.
\item Unfortunately, in general the error assumption\eqnref{e-error} does not imply that~$v\in\cC$ is a codeword closest to~$\tilde{v}$ with respect to the
      overall Hamming distance.
      For instance, for $\tilde{v}=(2400)+z(1100)+z^2(0000)+z^3(0230)+z^4(4100)+z^5(0000)+z^6(0023)+z^7(0320)+z^8(0000)+z^9(3400)$ we have
      $\wt(\tilde{v}_j,\tilde{v}_{j+1},\tilde{v}_{j+2})\leq 4$ for all $j\geq0$ and therefore the error assumption is satisfied for $v=0$ and
      the algorithm will return the zero word.
      However, in this case the codeword $\hat{v}$ from~(b) satisfies $\dd(\tilde{v},\hat{v})=12<\dd(0,\tilde{v})$.
      Again, one can show that~$\hat{v}$ is the unique closest codeword in~$\cC$.
\end{alphalist}
\end{exa}

We will close the paper with comparing the error-correcting capability and time complexity of our algorithm with existing algorithms
handling codes of comparable size.
In order to do so, let us first summarize the performance of our algorithm.
It is known~\cite[p.~247]{RoRu00} that Reed-Solomon decoding of an $[n,k]$ code has a time complexity of
$\cO\big(n(\log_2 n)^2\big)$, counting operations in the field $\F_q$, where $q\geq n+1$.
Using list decoding this complexity will grow by the factor~$l$, where~$l$ is the size of the list of
codewords produced by the algorithm~\cite[p.~255]{RoRu00}.
At each cycle of Step~1) Algorithm~\ref{A-Dec} essentially consists of invoking at most $m+1\leq n$ times Reed-Solomon
(or list) decoding of a Reed-Solomon code of length~$n$ (and dimension $jk,\;j=1,\ldots,m+1$) and thus has a
time complexity of $\cO\big((m+1)n(\log_2 n)^2\big)$ of operations in the field $\F_q=\F_{n+1}$.
It can correct up to $\lfloor d/2\rfloor$ errors occurring on any string $v_{j(m+1)},\ldots,v_{(j+1)(m+1)-1},\,j\in\N_0$,
of $m+1$ consecutive codeword blocks, and where~$d$ is as in\eqnref{e-ddfree}.

Let us now turn to the decoding algorithms mentioned in the introduction.

1) First of all, since Algorithm~\ref{A-Dec}/\ref{A-DecBlock} applies to a convolutional code $\cC\subseteq\F_q[z]^n$ of
degree $km$, see Theorem~\ref{P-DCCC}, and where the field size is $q=n+1$, Viterbi decoding for this particular convolutional code is,
in general, not feasible due to the high state space cardinality $(n+1)^{km}$.

2) Let us consider an $[n,k]$ Reed-Solomon block code, like $\cB_0$, for the encoding/decoding of the data stream
$v_0,v_1,v_2,\ldots$.
Hence, each block~$v_j$ is encoded/decoded independently.
This requires a time complexity of $\cO\big(n(\log_2 n)^2\big)$ for the decoding of each block and can correct up to
$\lfloor(n-k)/2\rfloor$ errors on any block, which means up to $(m+1)\lfloor(n-k)/2\rfloor$ errors on a string of $m+1$
consecutive blocks.
While this is, in general, larger than $\lfloor d/2\rfloor$, it only applies if no more than $\lfloor(n-k)/2\rfloor$
errors appear on a single block.
For instance, none of the coefficients of the received word~$\tilde{v}$ in Example~\ref{E-F51} could have been correctly decoded
because each $\tilde{v}_j$ is more than one error apart from the block code~$\cB_0$.

3) Suppose now that we use a (generalized) Reed-Solomon code of the size of $\cB$, that is, an $[(m+1)n,(m+1)k]$ RS code, in
order to encode/decode every string of $m+1$ consecutive blocks $v_{j(m+1)},\ldots,v_{(j+1)(m+1)-1},\,j\in\N_0$ independently.
This way we could correct up to $\lfloor(m+1)(n-k)/2\rfloor$ errors on any such string.
But this enhanced error-correcting capability comes with a significantly higher time complexity.
Indeed, the complexity goes up to $\cO\big((m+1)n,(\log_2(m+1)n)^2\big)$, and this is counting operations in a much
larger field with at least $(m+1)n$ elements.

4) In this part, we will compare our algorithm with a decoding algorithm designed for unit memory convolutional codes in
the thesis~\cite[Sec.~4.3]{Wei98}.
Consider a code with generator matrix $G_0+G_1z$, where $G_0,\,G_1\in\F^{k\times n}$.
Suppose $G_0,\,G_1$ generate block codes with distances~$\delta_0,\,\delta_1$, respectively.
Then the algorithm in \cite[Sec.~4.3]{Wei98} can correctly recover the sent codeword provided that a) no more than a total of
$t:=\lfloor (\delta_0+\delta_1-1)/2\rfloor$ errors occurred during the transmission, b) the degree of the sent codeword
(or an upper bound thereof) is known, and c) the block codes generated by $G_0,\,G_1$ can be decoded effectively.
Applying this to the code in Theorem~\ref{P-DCCC} with memory $m=1$, we obtain $\delta_0=\delta_1=n-k+1$, and therefore the
algorithm in~\cite{Wei98} can correct up to a total of $t=n-k$ errors occurring during the whole transmission.
It is based on decoding the Reed-Solomon codes generated by~$G_0,\,G_1$ and thus has a running time of
$\cO\big(n(\log_2 n)^2\big)$ for the decoding of each codeword block.
In contrast, Algorithm~\ref{A-Dec}/\ref{A-DecBlock} can correct up to $t'=\lfloor n-k-\frac{k-1}{2}\rfloor$ errors occurring on each
string of~$2$ consecutive codeword blocks, see\eqnref{e-ddfree}, and the running time is essentially the same.
Notice that for $k=1$ we have $t'=t$, making our algorithm significantly more suitable for this class of codes than the algorithm
proposed in~\cite{Wei98}.
As for general dimension~$k$, it is easy to see that $2t'\geq t$ (due to $k\leq n/2$, see Theorem~\ref{P-DCCC}) and
therefore our algorithm corrects, on each string of~$4$ consecutive blocks, at least as many errors as the total amount
corrected by~\cite{Wei98} -- as long as no more than~$t'$ errors happened on each half of that string.
We would also like to point out that the algorithm in~\cite{Wei98} needs the whole received word in order to perform
decoding, while our algorithm is iterative in the sense that it starts decoding as soon as the first~$2$ blocks have been received.
Of course, the algorithm in~\cite{Wei98} is applicable to any convolutional code as long as it has unit memory,
whereas our algorithm depends on the weight property described in\eqnref{e-d},\eqnref{e-d2} and is specifically designed
for the codes of Theorem~\ref{P-DCCC}, but requires a weaker assumption on the memory.

5) Finally, it remains to compare Algorithm~\ref{A-Dec}/\ref{A-DecBlock} with an algebraic decoding algorithm developed for
convolutional codes in~\cite{Ro99}.
That algorithm is based on an input/state/output description of the code in question, and its performance
may be summarized as follows (after adjusting to row vector notation):
Suppose the $k$-dimensional code $\cC=\im G\subseteq\F[z]^n$ of degree~$\delta$ has i/s/o description
\[
   x_{t+1}=x_tA+u_tB,\quad y_t=x_tC+u_tD,
\]
where $(u_t,y_t)\in\F^{k+(n-k)}$ is the sequence of codeword coefficients, $x_t\in\F^{\delta}$ is the state sequence,
and $(A,B,C,D)\in\F^{\delta\times\delta}\times\F^{k\times\delta}\times\F^{\delta\times(n-k)}\times\F^{k\times(n-k)}$.
Suppose $\Theta\in\N$ is such that the matrix $(C,\,AC,\,\ldots,\,A^{\Theta-1}C)$ has full row rank and that $T>\Theta$
and $d_1\in\N$ are such that the matrix
\begin{equation}\label{e-M}
   M:=\begin{pmatrix}B\\ BA\\ \vdots\\ BA^{T-1}\end{pmatrix}\in\F^{Tk\times\delta}
\end{equation}
has full column rank and $\ker M:=\{v\in\F^{Tk}\mid vM=0\}$ is a block code of distance at least~$d_1$.
Then the decoding algorithm in~\cite{Ro99} will return the sent codeword if at most
\begin{equation}\label{e-lambda}
   \lambda:=\min\big\{\lfloor (d_1-1)/2\rfloor,\, \lfloor T/(2\Theta)\rfloor \big\}
\end{equation}
errors occurred on any time window $[j,\, j+T-1],\,j\in\N_0$, that is, if any string of~$T$ consecutive codeword blocks
does not contain more than~$\lambda$ errors.

In the sequel we will show that Algorithm~\ref{A-Dec}/\ref{A-DecBlock} is better suited for decoding
our particular class of codes than the algorithm in~\cite{Ro99}.
Indeed, we will show that for our class of codes $\lambda\leq\lfloor m/2\rfloor$ and $T\leq m+1$.
Comparing this to\eqnref{e-ddfree} shows that Algorithm~\ref{A-Dec}/\ref{A-DecBlock} can correct significantly more
errors on intervals of length $m+1$.
Indeed, using that $m\leq n/k-1$ it is not hard to see that $\lfloor d/2\rfloor\geq m$, so that our algorithm can correct at
least twice as many errors.

Let us now turn to the details.
Recall the data given right before Theorem~\ref{P-DCCC} and fix the parameters and the encoder~$G$ as in that theorem.
Then the code $\cC=\im G$ has degree $\delta=mk$ and therefore,
in order for~$M$ in\eqnref{e-M} to have full column rank we need $T\geq m$ and for $\ker M$ to be a nontrivial code
we even need $T\geq m+1$.
Write $G=[Q,P]$, where $Q\in\F[z]^{k\times k}$.
We first note that the matrix~$Q$ is upper triangular.
Indeed, the polynomial $f\in\F[x]$ given in Theorem~\ref{P-DCCC} has degree $n-k$ and thus $\deg(x^lf)\leq n-1$ for
all $l=0,\ldots,k-1$.
As a consequence, we do not have to reduce modulo $x^n-1$ when computing in the quotient ring~$A$.
Since $x^l\mid (x^lf)$, we see that the first~$l$ entries of the vector $\fv(x^lf)\in\F^n$ are zero while the
$(l+1)$-st entry is nonzero (since~$f$ has nonzero constant term).
But then the same is true for $\fv\big(\sigma^j(x^lf)\big)$ because for any polynomial $g\in A$ we have
$\sigma(g)=g(\alpha^kx)$ and no reduction modulo $x^n-1$ is needed.
All this shows that the matrix~$Q$ is upper triangular and that the diagonal entries have degree~$m$.
With the aid of Theorem~\ref{P-DCCC}(1) this yields that $Q^{-1}P$ is a proper rational matrix.
Now we may use the controller canonical form of $Q^{-1}P$ in order to get an i/s/o representation of the code.
Using the method outlined in \cite[Sec.~6.4.1]{Kail80} (and transposing everything for row vector notation)
shows that the matrix $A\in\F^{mk\times mk}$ is upper block triangular with diagonal blocks of size $m\times m$ and
$B\in\F^{k\times mk}$ is upper block triangular with diagonal blocks of size $1\times m$.
Thus, collecting the last rows of each block in the matrix~$M$ in\eqnref{e-M} results in a submatrix
$M'\in\F^{T\times mk}$ in which the first $(k-1)m$ columns are zero and thus $\rank M'\leq m$.
As a consequence, since $T\geq m+1$ there exists a nonzero vector $v\in\F^{Tk}$ of weight at most~$m+1$ such that $vM=0$.
Thus, $d_1\leq\dist(\ker M)\leq m+1$ and the error correcting bound in\eqnref{e-lambda} satisfies
$\lambda\leq\lfloor m/2\rfloor$.
Using some more detailed considerations one can show that, likewise, every other input/output partition of the codewords
(that is, permuting the columns of~$G$ before splitting the matrix into $[Q,P]$)
along with an according i/s/o representation leads to the same error-correcting bound $\lambda\leq \lfloor m/2\rfloor$.
Summarizing, we may conclude that Algorithm~\ref{A-Dec}/\ref{A-DecBlock} is better suited for decoding
the class of codes defined in Theorem~\ref{P-DCCC} than the algorithm in~\cite{Ro99}.


\end{document}